\newtheorem{theorem}{Theorem}
\newtheorem{thm}[theorem]{Theorem}
\newtheorem{lemma}[theorem]{Lemma}
\newtheorem{defi}[theorem]{Definition}
\def\eatspace#1{#1}
\def\step#1#2{\par\kern1pt\hangindent#2em\hangafter=1\noindent\rlap{\small#1}\kern#2em\relax\eatspace}
\let\set\mathbb
\def\<#1>{\langle#1\rangle}
\begin{document}
\fancyhead{}
\title{Flip Graphs for Polynomial Multiplication}
\thanks{S.\ Chen was partially supported by the National Key R\&D Program of China (No. 2023YFA1009401),
the NSFC grant (No. 12271511), CAS Project for Young Scientists in Basic Research (Grant
No. YSBR-034), and the CAS Fund of the Youth Innovation Promotion Association (No. Y2022001). 
M.\ Kauers was supported by the Austrian FWF grants 10.55776/PAT8258123, 10.55776/I6130, and 10.55776/PAT9952223.
}

\author{Shaoshi Chen}
\affiliation{%
  \institution{KLMM, Academy of Mathematics and Systems Science}
  \institution{Chinese Academy of Sciences}
  \institution{School of Mathematical Sciences}
  \institution{University of Chinese Academy of Sciences}
  \state{}
  \postcode{Beijing 100190}
  \country{China}
}
\email{schen@amss.ac.cn}

\author{Manuel Kauers}
\affiliation{%
  \institution{Institute for Algebra}
  \institution{Johannes Kepler University}
  \state{}
  \postcode{4040 Linz}
  \country{Austria}
}

\email{manuel.kauers@jku.at}

\begin{abstract}
  Flip graphs were recently introduced in order to discover new matrix
  multiplication methods for matrix sizes. The technique applies to
  other tensors as well. In this paper, we explore how it performs for
  polynomial multiplication.
\end{abstract}
\begin{CCSXML}
	<ccs2012>
	<concept>
	<concept_id>10010147.10010148.10010149.10010150</concept_id>
	<concept_desc>Computing methodologies~Algebraic algorithms</concept_desc>
	<concept_significance>500</concept_significance>
	</concept>
	</ccs2012>
\end{CCSXML}

\ccsdesc[500]{Computing methodologies~Algebraic algorithms}

\keywords{Tensor Rank, Algebraic Complexity, Bilinear Algorithms}
\maketitle

\section{Introduction}

Matrix multiplication can be viewed as a bilinear map
\[
  K^{n\times k}\times K^{k\times m}\to K^{n\times m},
  \quad(A,B)\mapsto AB.
\]
As such, it can also be described as an element of the
tensor product space $K^{n\times k}\otimes K^{k\times m}\otimes K^{n\times m}$.
Indeed, if we write $a_{i,j}$ for the element of $K^{n\times k}$
having a $1$ at position $(i,j)$ and zeros in all other positions,
and we write $b_{i,j}$ and $c_{i,j}$ for the elements of $K^{k\times m}$
and $K^{n\times m}$ defined analogously, then the matrix multiplication
tensor can be written as
\[
  \sum_{u=1}^n\sum_{v=1}^k\sum_{w=1}^m a_{u,v}\otimes b_{v,w}\otimes c_{u,w}.
\]
This is a sum of $nkm$ ``pure'' tensors, i.e., tensors that can be written
in the form $M\otimes M'\otimes M''$ for certain $M\in K^{n\times k}$, $M'\in K^{k\times m}$
and $M''\in K^{n\times k}$.

There are other, less obvious ways to write the matrix multiplication tensor
as a sum of pure tensors. The \emph{rank} of a tensor $T$ is defined as the
minimal number of pure tensors such that $T$ is the sum of these tensors.
Pure tensors are therefore tensors of rank~$1$.

The quest for fast matrix multiplication algorithms boils down to the question
what the rank of the matrix multiplication tensor is. We do not know. For $n,k,m$
all equal and asymptotically large, the rank is $O(n^\omega)$ for a certain
number $\omega$ which is known to be less than 2.371~\cite{williams24,alman24}. For various
small and specific values of $n,k,m$, Sedoglavic~\cite{fastmm} keeps track of the
best known upper bounds for the corresponding ranks.

Low rank tensor decompositions for small and specific matrix formats can be
found in several ways. Some have been found by hand~\cite{St:Gein,La:Anaf,HK:OMtN}, some by numerical
techniques~\cite{Sm:Tbca,SS:TTRo}, some by SAT solvers~\cite{HKS:Nwtm,heule19a,CBH:ANGM}, some by machine
learning~\cite{FBH+:Dfmm}. Most recently, Kauers and Moosbauer~\cite{kauers23f} proposed
the idea of flip graphs for searching for such schemes. The idea of flip
graphs is summarized in Section~\ref{sec:flipgraphs} below. 

These search techniques are not restricted to the matrix multiplication tensor
but can also be applied to other bilinear maps. For example, if $K[x]_{\leq n}$
denotes the vector space of all polynomials of degree at most~$n$, then polynomial
multiplication can be viewed as a bilinear map
\[
K[x]_{\leq n}\times K[x]_{\leq m}\to K[x]_{\leq n+m}
\]
and thus as an element of the tensor product
\[
K[x]_{\leq n}\otimes K[x]_{\leq m}\otimes K[x]_{\leq n+m}.
\]
Stated in this language, the multiplication tensor for $n=m=1$ reads
\begin{alignat*}1
     &a_0\otimes b_0\otimes c_0\\
  {}+{}&a_1\otimes b_0\otimes c_1\\
  {}+{}&a_0\otimes b_1\otimes c_1\\
  {}+{}&a_1\otimes b_1\otimes c_2.
\end{alignat*}
By adding the term $a_0\otimes b_0\otimes c_1$ to the second summand
and subtracting it from the first, and by adding the term $a_1\otimes b_1\otimes c_1$
to the third summand and subtracting it from the fourth, we can see that
the tensor can also be written as follows:
\begin{alignat*}1
     &a_0\otimes b_0\otimes (c_0-c_1)\\
  {}+{}&(a_0+a_1)\otimes b_0\otimes c_1\\
  {}+{}&(a_0+a_1)\otimes b_1\otimes c_1\\
  {}+{}&a_1\otimes b_1\otimes (c_2-c_1).
\end{alignat*}
Now the second and the third summand can be merged into one, and we obtain
\begin{alignat*}1
     &a_0\otimes b_0\otimes (c_0-c_1)\\
  {}+{}&(a_0+a_1)\otimes (b_0+b_1)\otimes c_1\\
  {}+{}&a_1\otimes b_1\otimes (c_2-c_1).
\end{alignat*}
This is exactly Karatsuba's algorithm in tensor notation, derived with the
idea of flip graphs.

The purpose of this paper is to explore more generally how the flip graph technique
performs for the polynomial multiplication tensor. Polynomial multiplication is
much better understood than matrix multiplication. In particular, we know that
the tensor rank for multiplying polynomials of degree $n$ with polynomials of
degree $m$ (over a sufficiently large coefficient field) is equal to
$n+m-1$. Our goal is not to use flip graphs to learn something new about
polynomial multiplication, but to use polynomial multiplication to learn
something new about flip graphs. Our main result is that the flip graph
technique can find the best possible multiplication algorithms (again assuming a
sufficiently large coefficient field), and to bound the length of the path from
the standard algorithm to the optimal algorithm. In addition, in Section~\ref{sec:chartwo}
we report on some experiments we did searching for polynomial multiplication
algorithms for the coefficient field~$\set Z_2$.

\section{Tensors and Flip Graphs}\label{sec:flipgraphs}

Recall that the tensor product $U\otimes V\otimes W$ of three vector spaces
$U,V,W$ over some field $K$ consists of all $K$-linear combinations of
equivalence classes of elements of $U\times V\times W$ subject to the relations
\begin{alignat*}1
  (u_1+u_2,v,w) &= (u_1,v,w)+(u_2,v,w)\\
  (u,v_1+v_2,w) &= (u,v_1,w)+(u,v_2,w)\\
  (u,v,w_1+w_2) &= (u,v,w_1)+(u,v,w_2)\\
  \alpha(u,v,w) &= (\alpha u,v,w)=(u,\alpha v,w)=(u,v,\alpha w)
\end{alignat*}
for all $\alpha\in K$, $u,u_1,u_2\in U$, $v,v_1,v_2\in V$, and $w,w_1,w_2\in W$.
Instead of $(u,v,w)$, we write $u\otimes v\otimes w$. Every element of the
tensor product $U\otimes V\otimes W$ can thus be written in the form
\[
  (u_1\otimes v_1\otimes w_1)+\cdots+(u_k\otimes v_k\otimes w_k)
\]
for certain $u_1,\dots,u_k\in U$, $v_1,\dots,v_k\in V$, and $w_1,\dots,w_k\in W$.

If a tensor $T$ is given in this form, the question is whether it can also
be written as a sum of fewer than $k$ tensors of the form $u\otimes v\otimes w$.
If this is not the case, then $k$ is called the rank of~$T$.

Using the relations quoted above, it is easy to turn a given tensor representation
with $k$ terms into one with $k+1$ terms. To do so, just pick one of its terms,
say $u\otimes v\otimes w$, write $u$ as $u'+u''$ for some vectors $u',u''\in U$,
and replace the term $u\otimes v\otimes w$ by the two terms
$u'\otimes v\otimes w$ and $u''\otimes v\otimes w$. This operation is called a
\emph{split.}

It is less obvious whether we can go in the other direction. If we are lucky
and the given tensor representation contains two terms $u_i\otimes v_i\otimes w_i$
and $u_j\otimes v_j\otimes w_j$ that agree in two positions, say $v_i=v_j$ and $w_i=w_j$,
then we can merge them into $(u_i+u_j)\otimes v_i\otimes w_i$. This operation is
called a \emph{reduction.} A reduction is a split applied backwards.

If we are not lucky and the given tensor representation does not allow for a reduction,
this does in general not mean that $k$ is the tensor rank. We have seen this in the
example in the introduction: even though the tensor representation
\[
a_0\otimes b_0\otimes c_0 + a_1\otimes b_0\otimes c_1 +
a_0\otimes b_1\otimes c_1 + a_1\otimes b_1\otimes c_2
\]
does not admit a reduction, the tensor can also be written as
\[
a_0\otimes b_0\otimes(c_0-c_1)+
(a_0+a_1)\otimes(b_0+b_1)\otimes c_1+
a_1\otimes b_1\otimes(c_2-c_1),
\]
so the rank is (at most)~$3$.

As we saw, this representation can be found with the help of an operation that is
applicable to any two terms $u_i\otimes v_i\otimes w_i$ and $u_j\otimes v_j\otimes w_j$
that agree in one position, say $u_i=u_j$. For every choice $\lambda\in K$, we have
\begin{alignat*}1
  & u_i\otimes v_i\otimes w_i + u_i\otimes v_j\otimes w_j\\
  &= u_i\otimes (v_i-\lambda v_j)\otimes w_i + u_i\otimes v_j\otimes(w_j+\lambda w_i),
\end{alignat*}
so we can replace the two terms in the first row by the two terms in the second.
This operation is called a \emph{flip.}

In summary, a split increases the number of terms by one, a reduction decreases the
number of terms by one, and a flip leaves the number of terms unchanged. In fact, a
flip can be decomposed into a split followed by a reduction:
\begin{alignat*}1
  & u_i\otimes v_i\otimes w_i + u_i\otimes v_j\otimes w_j\\
  &= u_i\otimes (v_i-\lambda v_j+\lambda v_j)\otimes w_i + u_i\otimes v_j\otimes w_j\\
  &\stackrel{\smash{\clap{\scriptsize \strut split}}}=\quad
  u_i\otimes (v_i-\lambda v_j)\otimes w_i
  + u_i\otimes (\lambda v_j) \otimes w_i + u_i\otimes v_j\otimes w_j\\
  &= u_i\otimes (v_i-\lambda v_j)\otimes w_i
  + u_i\otimes v_j \otimes (\lambda w_i) + u_i\otimes v_j\otimes w_j\\
  &\stackrel{\smash{\clap{\scriptsize\strut reduction}}}=\quad
  u_i\otimes (v_i-\lambda v_j)\otimes w_i
  + u_i\otimes v_j \otimes(w_j+\lambda w_i).  
\end{alignat*}
Note also that flips are reversible: if a tensor representation $T'$ can be
reached from a tensor representation $T$ by applying a flip, then $T$ can also
be reached from $T'$ via a flip.

In~\cite{kauers23f}, Kauers and Moosbauer introduced the flip graph of a tensor $T\in
U\otimes V\otimes W$.  Every tensor representation is a vertex in this graph,
and two vertices are connected if one can be reached from the other by either a
flip or a reduction. For some matrix multiplication tensors, they found shorter
representations than previously known by performing a random search in the flip
graph. Arai, Ichikawa and Hukushima~\cite{adaptive} enriched the flip graph with some
edges representing splits and found further improvements for some matrix formats
with their variant.

Having edges for splits in the graph has advantages and disadvantages. An
advantage is that in the version with split edges, the graph is strongly
connected, i.e., any tensor representation can be reached from any other via a
finite path. This was shown in Thm.~9 of~\cite{kauers23f} for the matrix
multiplication tensor but applies \emph{mutatis mutandis} to every tensor. A
disadvantage is that for every tensor representations there is an extremely
large number of options for applying a split. In a random search, it is not
clear how to make a reasonable choice.

Flips can be seen as an attempt to make reasonable choices for splits: only
choose such splits that allow for doing a reduction in the next step, because in
this combination, the number of terms in the tensor representation does not go
up. However, with only flips and reductions, the graph is no longer strongly
connected. See the discussion of Kauers and Moosbauer~\cite{kauers23f} for examples.

For matrix multiplication tensors, we do not know whether a representation of
minimal length can always be reached from the standard algorithm using only
flips and reductions. Here we will show that flips and reductions are sufficient
for polynomial multiplication tensors.

\section{Polynomial Multiplication}

For a field $K$, we consider the multiplication of a polynomial of degree at most $n$
with a polynomial of degree at most~$m$.
For $k\in\set N$, we write $K[x]_{\leq k}$ for the $K$-vector space of all polynomials
of degree at most~$k$. A basis of this vector space is $\{1,x,\dots,x^k\}$.
We are dealing with three such spaces: $K[x]_{\leq n}$, $K[x]_{\leq m}$, and $K[x]_{\leq n+m}$,
and it will be helpful to give distinct names to the basis elements of these three
spaces.
The basis elements $1,x,\dots,x^n$ of $K[x]_{\leq n}$ will be called $a_0,\dots,a_n$,
those of $K[x]_{\leq m}$ will be called $b_0,\dots,b_m$,
and those of $K[x]_{\leq n+m}$ will be called $c_0,\dots,c_{n+m}$.

\begin{defi}
  The tensor 
  \[
  \sum_{i=0}^n\sum_{j=0}^m a_i\otimes b_j\otimes c_{i+j}
  \in K[x]_{\leq n}\otimes K[x]_{\leq m}\otimes K[x]_{\leq n+m}
  \]
  is called the polynomial multiplication tensor for degrees $n$ and $m$ over~$K$,
  and this sum is called its standard representation.
\end{defi}

Polynomial multiplication is well understood, and asymptotically fast algorithms
for polynomial multiplication are well known. They are a staple of computer algebra
and covered in every textbook on computer algebra (cf. e.g., Chapter~8 of~\cite{vzgathen99}).
Fast algorithms for polynomial multiplication are based on the principle of
evaluation and interpolation: if $p$ and $q$ are polynomials of respective degrees
$n$ and $m$ and $f=pq$ is their product, then $f$ is the unique polynomial whose
values at $n+m+1$ distinct points $x_0,\dots,x_{n+m}\in K$ are the products $p(x_i)q(x_i)$
of the values of $p$ and $q$ at these points.

According to Lagrange's interpolation formula, we have
\[
  f(x)=\sum_{k=0}^{n+m}p(x_k)q(x_k)\prod_{\ell\neq k}\frac{x-x_\ell}{x_k-x_\ell}.
\]
In order to express this using the notation $c_0,\dots,c_{n+m}$ for the basis elements of $K[x]_{\leq n+m}$,
let $\alpha_{\ell,k}$ be the coefficient of $x^\ell$ in $\prod_{\ell\neq k}\frac{x-x_\ell}{x_k-x_\ell}$,
so that 
\begin{alignat}1\label{eq:ck}
c^{(k)}&:=\prod_{\ell\neq k}\frac{x-x_\ell}{x_k-x_\ell} = \sum_{\ell=0}^{n+m}\alpha_{\ell,k}c_\ell.
\end{alignat}
We then have
\begin{alignat}1\label{eq:toomcook}
  &\sum_{k=0}^{n+m}\biggl(\bigl(\sum_{i=0}^n x_k^ia_i\bigr)\otimes\bigl(\sum_{j=0}^m x_k^jb_j\bigr)\otimes c^{(k)}\biggr)\\
  &=\sum_{k=0}^{n+m}\sum_{i=0}^n\sum_{j=0}^m\bigr( (x_k^i a_i)\otimes (x_k^j b_j)\otimes c^{(k)}\bigr)\notag\\
  &=\sum_{k=0}^{n+m}\sum_{i=0}^n\sum_{j=0}^m\bigr( a_i\otimes b_j\otimes (x_k^{i+j} c^{(k)})\bigr)\notag\\
  &=\sum_{i=0}^n\sum_{j=0}^m a_i\otimes b_j\otimes \sum_{k=0}^{n+m} x_k^{i+j} c^{(k)}.\notag
\end{alignat}
According to the following lemma, this is exactly the polynomial multiplication tensor.

\begin{lemma}\label{A}
  In the notation introduced above, we have
  \[
    \sum_{k=0}^{n+m} x_k^{i+j} c^{(k)} = c_{i+j}
  \]
  for all $i\in\{0,\dots,n\}$ and all $j\in\{0,\dots,m\}$.
\end{lemma}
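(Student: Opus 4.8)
The plan is to recognize the claimed identity as nothing more than the statement that Lagrange interpolation through the $n+m+1$ nodes $x_0,\dots,x_{n+m}$ reproduces every polynomial of degree at most $n+m$ exactly, applied here to the monomial $x^{i+j}$.

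First I would make the identification between $K[x]_{\leq n+m}$ and actual polynomials explicit: under $c_\ell\leftrightarrow x^\ell$, the right-hand side $c_{i+j}$ is the monomial $x^{i+j}$, and by the defining relation~\eqref{eq:ck} the left-hand side $\sum_{k=0}^{n+m} x_k^{i+j}c^{(k)}$ is the polynomial $\sum_{k=0}^{n+m} x_k^{i+j}\,c^{(k)}(x)$. Both are polynomials in $x$ of degree at most $n+m$: each cardinal function $c^{(k)}=\prod_{\ell\neq k}\frac{x-x_\ell}{x_k-x_\ell}$ is a product of $n+m$ linear factors, and $x^{i+j}$ has degree $i+j\le n+m$.

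Second, I would evaluate both sides at each node $x_r$ for $r\in\{0,\dots,n+m\}$. By construction $c^{(k)}(x_r)=\prod_{\ell\neq k}\frac{x_r-x_\ell}{x_k-x_\ell}$ equals $1$ if $r=k$ and $0$ if $r\neq k$, since for $r\neq k$ one factor in the numerator vanishes; thus $c^{(k)}(x_r)=\delta_{k,r}$. Hence the left-hand side evaluated at $x_r$ is $x_r^{i+j}$, which is exactly the value at $x_r$ of the right-hand side $x^{i+j}$. So the two polynomials of degree at most $n+m$ agree at the $n+m+1$ pairwise distinct points $x_0,\dots,x_{n+m}$; since a nonzero polynomial of degree at most $n+m$ has at most $n+m$ roots, their difference vanishes identically. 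Comparing the coefficients of $x^\ell$ (equivalently, the coefficients of $c_\ell$) on both sides then yields the lemma in the stated form.

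There is essentially no real obstacle here; the only points requiring care are bookkeeping ones. One must keep track of the standing assumption that the evaluation points $x_0,\dots,x_{n+m}$ are distinct, which is where the hypothesis that $K$ is sufficiently large enters (it suffices that $|K|\ge n+m+1$), and one should phrase the uniqueness-of-interpolation step for polynomials of degree at most $n+m$ rather than exactly $n+m$, since the sum $\sum_k x_k^{i+j}c^{(k)}$ could a priori have smaller degree. Neither of these causes any difficulty.
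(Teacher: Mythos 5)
Your proof is correct and follows essentially the same route as the paper: both recognize the identity as the statement that Lagrange interpolation at the $n+m+1$ distinct nodes reproduces the polynomial $x^{i+j}$ of degree $i+j\le n+m$. The only difference is that the paper simply applies the interpolation formula it quoted earlier to $p=x^i$, $q=x^j$, whereas you re-derive that exactness from scratch via $c^{(k)}(x_r)=\delta_{k,r}$ and the root-counting argument, making the argument slightly more self-contained but not substantively different.
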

\begin{proof}
  Consider the polynomials $p=a_i=x^i$ and $q=b_j=x^j$. Their product is $pq=x^{i+j}=c_{i+j}$.
  The values of $pq$ at $x_0,\dots,x_{n+m}$ are $x_0^{i+j},\dots,x_{n+m}^{i+j}$.
  Therefore, by the interpolation formula quoted above, we have
  \[
    c_{i+j}=pq=\sum_{k=0}^{n+m}p(x_k)q(x_k)c^{(k)}=\sum_{k=0}^{n+m}x_k^{i+j}c^{(k)},
  \]
  as claimed.
\end{proof}

The representation~\eqref{eq:toomcook} is the basis of the multiplication
algorithms of Toom and Cook. We therefore call it the Toom-Cook representation
of the polynomial multiplication tensor. It implies that the rank of the
polynomial multiplication tensor is at most $n+m+1$. Since
$c^{(0)},\dots,c^{(n+m)}$ are linearly independent over~$K$, the rank cannot be
smaller than $n+m+1$, so the Toom-Cook representation is optimal with
respect to the number of multiplications (see~\cite{bodrato07,zanoni10} for a
discussion of the required number of additions).

\section{There is a Path}

The purpose of this section is to prove the following result about the flip graph for polynomial multiplication.

\begin{thm}\label{mainthm}
In the flip graph for polynomial multiplication over a field containing at least $n+m+1$ distinct elements $x_0,\dots,x_{n+m}$, there
is a path consisting of at most $n m (2n+2m+1)$ flips and $n m$ reductions that leads from the standard representation 
\[
  \sum_{i=0}^n\sum_{j=0}^m\biggl( a_i\otimes b_j\otimes c_{i+j}\biggr)
\]
of rank $(n+1)(m+1)$ to the Toom-Cook representation
\[
  \sum_{k=0}^{n+m}\biggl( \bigl(\sum_{i=0}^nx_k^ia_i\bigr)\otimes\bigl(\sum_{j=0}^mx_k^jb_j\bigr)\otimes c^{(k)} \biggr)
\]
of rank $n+m+1$, with $c^{(k)}$ as introduced in~\eqref{eq:ck}.
\end{thm}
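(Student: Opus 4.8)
The plan is to build the path explicitly. Since the standard representation has rank $(n+1)(m+1)$ and the Toom--Cook representation has rank $n+m+1$, and $(n+1)(m+1)-(n+m+1)=nm$, any path between them must contain \emph{exactly} $nm$ reductions; the claimed bound $nm(2n+2m+1)$ on the number of flips therefore amounts to allowing at most $2n+2m+1$ flips before each of these $nm$ reductions. Accordingly, I would organise the path into $nm$ phases, each phase consisting of at most $2n+2m+1$ flips followed by a single reduction, and I would specify a chain of intermediate representations $R_0,\dots,R_{nm}$, where $R_0$ is the standard representation, $R_{nm}$ is the Toom--Cook representation, and $R_t$ arises from $R_{t-1}$ by one phase, so that $R_t$ has $(n+1)(m+1)-t$ terms.

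The construction itself is in the same spirit as the derivation of Karatsuba's algorithm in the introduction (flip a few terms so that two of them agree in two of the three tensor slots, then reduce). Picture the $(n+1)(m+1)$ standard terms on a grid indexed by $(i,j)$. I would single out an ``L-shaped'' set of $n+m+1$ cells --- for instance those with $i=0$ together with those with $j=0$ --- as the cells that survive and ultimately carry the $n+m+1$ Toom--Cook terms $A_k\otimes B_k\otimes c^{(k)}$, while the $nm$ interior cells (those with $i\ge 1$ and $j\ge 1$) are absorbed, one per phase. In phase $t$ one picks an interior cell, moves its term by flips along a monotone lattice path toward the surviving L while simultaneously preparing a neighbouring surviving term, until the two agree in two positions, and then reduces. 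The intermediate $R_t$ is a hybrid: the cells absorbed so far have disappeared; the surviving cells carry partially ``evaluated'' components --- partial sums such as $\sum_{i} x_k^i a_i$ and $\sum_{j} x_k^j b_j$ in the first two slots and partial Lagrange combinations of the $c_\ell$ in the third, all built from the given points $x_0,\dots,x_{n+m}$ --- and the not-yet-touched interior cells are still in standard form. The identity $\sum_k x_k^{i+j}c^{(k)}=c_{i+j}$ from Lemma~\ref{A} is exactly what guarantees that, once the last interior cell is absorbed, the evaluated components have assembled into the Toom--Cook representation.

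The main obstacle is the bookkeeping that makes this precise: one must (i) write down each $R_t$ explicitly, (ii) check that every flip in a phase is legal, i.e.\ acts on two terms that genuinely agree in the slot it modifies, (iii) check that the reduction closing each phase is legal, i.e.\ that the two terms being merged genuinely agree in two slots, and (iv) verify that processing one interior cell --- including the collateral adjustments needed because earlier phases have already altered the surviving terms --- costs at most $2n+2m+1$ flips; manipulating the third (the $c$-) component by flips, rather than the first two, is the delicate part of (ii)--(iv). I would first carry this out in full for $m=1$ and arbitrary $n$, where only a single row of interior cells must be processed and the pattern is already visible, and then handle the general case by processing the interior cells in a fixed order --- e.g.\ by anti-diagonals $i+j$, or row by row --- chosen so that each interior cell, when its turn comes, is adjacent in the grid to already-evaluated material and can be absorbed with a controlled number of flips. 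The reversibility of flips noted in Section~\ref{sec:flipgraphs} is a useful sanity check and also gives some latitude in how the $R_t$ are described.
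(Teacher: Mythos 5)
There is a genuine gap: what you have written is a plan, and the part you defer as ``bookkeeping'' is in fact the entire mathematical content of the theorem. Your correct observations --- that exactly $nm$ reductions must occur, that the Karatsuba-style flip-then-reduce pattern is the right local move, and that Lemma~\ref{A} is the algebraic identity making the endpoint come out right --- only reproduce the statement's counting and the setup; they do not produce the path. Items (i)--(iv) of your own list (exhibiting the intermediate representations, checking legality of each flip and each reduction, and bounding the collateral cost of repairing surviving terms that earlier phases have disturbed) are exactly where the difficulty sits, and nothing in the proposal indicates how to control them. In particular, your proposed invariant (untouched interior cells stay in standard form, surviving ``L-cells'' carry partial evaluations) is not obviously maintainable, because every flip alters \emph{both} terms it touches, so absorbing one interior cell necessarily perturbs the terms you intend to keep clean.

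Moreover, you impose a stronger structure than the theorem requires and than the paper establishes: you read the bound $nm(2n+2m+1)$ as ``at most $2n+2m+1$ flips before each of the $nm$ reductions.'' That uniform per-phase bound is neither needed nor justified, and the paper's path does not have it: the actual proof proceeds by induction on $m$, using one lemma that spends roughly $nm+(n-1)(m-1)+n$ flips (with no reductions at all) to peel off the evaluation point $x_{n+m}$ and expose, after a change of variables, a copy of the degree-$(n,m-1)$ standard tensor, and a second lemma costing $2n+1$ flips per evaluation point to separate the terms $\bigl(\sum_i x_\ell^i a_i\bigr)\otimes\bigl(\sum_j x_\ell^j b_j\bigr)\otimes c^{(\ell)}$ one at a time, with the $nm$ reductions occurring in clusters (e.g.\ $n-1$ consecutive reductions at the end of the recursion step). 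The total flip count then satisfies the recurrence $F(n,m)=nm+(n-1)(m-1)+n+F(n,m-1)+(n+m-1)(2n+1)+1$, whose solution is $nm(2n+2m+1)$; the bound emerges from solving this recurrence, not from a per-reduction budget. If you want to pursue your cell-by-cell scheme, you would have to either prove your uniform phase bound (which looks harder than the non-uniform recursive bookkeeping) or abandon it and, in effect, rediscover a recursive peeling argument of the kind the paper uses.
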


Since the proof of Thm.~\ref{mainthm} is somewhat technical, we first illustrate the construction
for the special case $n=m=1$. Start with the standard representation and use the expressions
from Lemma~\ref{A} for the $c_i$:
\begin{alignat*}1
     &a_0\otimes b_0\otimes(c^{(0)}+c^{(1)}+c^{(2)})\\
{}+{}&a_1\otimes b_0\otimes(x_0c^{(0)}+x_1c^{(1)}+x_2c^{(2)})\\
{}+{}&a_0\otimes b_1\otimes(x_0c^{(0)}+x_1c^{(1)}+x_2c^{(2)})\\
{}+{}&a_1\otimes b_1\otimes(x_0^2c^{(0)}+x_1^2c^{(1)}+x_2^2c^{(2)})
\end{alignat*}
We flip the first row with the second and the third with the fourth to the effect of eliminating
$c^{(2)}$ from the second and the fourth rows. This gives
\begin{alignat*}1
  &(a_0+x_2a_1)\otimes b_0\otimes(c^{(0)}+c^{(1)}+c^{(2)})\\
{}+{}&a_1\otimes b_0\otimes((x_0-x_2)c^{(0)}+(x_1-x_2)c^{(1)})\\
{}+{}&(a_0+x_2a_1)\otimes b_1\otimes(x_0c^{(0)}+x_1c^{(1)}+x_2c^{(2)})\\
{}+{}&a_1\otimes b_1\otimes(x_0(x_0-x_2)c^{(0)}+x_1(x_1-x_2)c^{(2)}).
\end{alignat*}
Now we flip the first row with the third and the second with the fourth to the effect of eliminating
$c^{(1)}$ from the second and the fourth rows. This gives
\begin{alignat*}1
 &(a_0+x_2a_1)\otimes(b_0+x_1b_1)\otimes(c^{(0)}+c^{(1)}+c^{(2)})\\
{}+{}&a_1\otimes(b_0+x_1b_1)\otimes((x_0-x_2)c^{(0)}+(x_1-x_2)c^{(1)})\\
{}+{}&(a_0+x_2a_1)\otimes b_1\otimes((x_0-x_1)c^{(0)}+(x_2-x_1)c^{(2)})\\
{}+{}&a_1\otimes b_1\otimes((x_0-x_1)(x_0-x_2)c^{(0)}),
\end{alignat*}
where the fourth row can also be written as
\[
  ((x_0-x_1)a_1)\otimes((x_0-x_2)b_1)\otimes c^{(0)}.
\]
Flip this row with the third to the effect of eliminating $c^{(0)}$ from the third row. This gives
\begin{alignat*}1
 &(a_0+x_2a_1)\otimes(b_0+x_1b_1)\otimes(c^{(0)}+c^{(1)}+c^{(2)})\\
{}+{}&a_1\otimes(b_0+x_1b_1)\otimes((x_0-x_2)c^{(0)}+(x_1-x_2)c^{(1)})\\
{}+{}&(a_0+x_2a_1)\otimes b_1\otimes((x_2-x_1)c^{(2)})\\
{}+{}&(a_0+x_0a_1)\otimes b_1\otimes((x_0-x_1)c^{(0)}),
\end{alignat*}
where the third and the fourth rows can also be written as
\begin{alignat*}1
{}+{}&(a_0+x_2a_1)\otimes((x_2-x_1)b_1)\otimes c^{(2)}\\
{}+{}&(a_0+x_0a_1)\otimes((x_0-x_1)b_1)\otimes c^{(0)}.
\end{alignat*}
Flip the third row with the first to the effect of eliminating $c^{(2)}$ from the first row. This gives
\begin{alignat*}1
 &(a_0+x_2a_1)\otimes(b_0+x_1b_1)\otimes(c^{(0)}+c^{(1)})\\
{}+{}&a_1\otimes(b_0+x_1b_1)\otimes((x_0-x_2)c^{(0)}+(x_1-x_2)c^{(1)})\\
{}+{}&(a_0+x_2a_1)\otimes(b_0+x_2b_1)\otimes c^{(2)}\\
{}+{}&(a_0+x_0a_1)\otimes((x_0-x_1)b_1)\otimes c^{(0)}.
\end{alignat*}
Rewrite the second row into
\[
 ((x_1-x_2)a_1)\otimes(b_0+x_1b_1)\otimes(\frac{x_0-x_2}{x_1-x_2}c^{(0)}+c^{(1)})
\]
and flip this row with the first to the effect of eliminating $c^{(1)}$ from the second row.
This gives
\begin{alignat*}1
&(a_0+x_1a_1)\otimes(b_0+x_1b_1)\otimes(c^{(0)}+c^{(1)})\\
{}+{}&((x_1-x_2)a_1)\otimes(b_0+x_1b_1)\otimes(\frac{x_0-x_1}{x_1-x_2}c^{(0)})\\
{}+{}&(a_0+x_2a_1)\otimes(b_0+x_2b_1)\otimes c^{(2)}\\
{}+{}&(a_0+x_0a_1)\otimes ((x_0-x_1)b_1)\otimes c^{(0)}.
\end{alignat*}
Write the second row again as
\[
((x_0-x_1)a_1)\otimes(b_0+x_1b_1)\otimes c^{(0)}
\]
and finally flip this row with the first to the effect of eliminating $c^{(0)}$ from the first row.
This gives
\begin{alignat*}1
 &(a_0+x_1a_1)\otimes(b_0+x_1b_1)\otimes c^{(1)}\\
{}+{}&(a_0+x_0a_1)\otimes(b_0+x_1b_1)\otimes c^{(0)}\\
{}+{}&(a_0+x_2a_1)\otimes(b_0+x_2b_1)\otimes c^{(2)}\\
{}+{}&(a_0+x_0a_1)\otimes((x_0-x_1)b_1)\otimes c^{(0)}.
\end{alignat*}
Now a reduction is applicable to the second and the fourth row.

Although this derivation was somewhat longer than the one presented in the introduction,
it is worth pointing out that for general $n$ and~$m$, the path length announced in
Thm.~\ref{mainthm} is short compared to the length of a path involving splits constructed
as in Thm.~9 of~\cite{kauers23f}. The construction of this path is most easily described
backwards: starting from the Toom-Cook representation, multiply out all terms using
splits, and then merge common terms using reductions.
In the Toom-Cook representation for polynomials of degrees $n$ and~$m$, there are
$n+m+1$ tensors of rank one, and in each of them, the first factor is a sum of $n+1$
terms, the second factor is a sum of $m+1$ terms, and the third factor is a sum
of $n+m+1$ terms. Multiplying all of them out requires $nm(n+m)(n+m+1)^2$ splits.
To get from here to the standard representation, which consists of $(n+1)(m+1)$ terms,
we need $nm(n+m)(n+m+1)^2 - (n+1)(m+1)$ reductions.
Altogether, the length of the resulting path quartic while the length of the path
of Thm.~\ref{mainthm} is only cubic.

We now turn to the proof of Thm.~\ref{mainthm}. To prepare for it, we need the following
two lemmas.

\begin{lemma}\label{B}
It takes no more than $nm+(n-1)(m-1)+n$ flips to get from
\[
 \sum_{i=0}^n\sum_{j=0}^m\biggl( a_i\otimes b_j\otimes\underbrace{\sum_{\ell=0}^{n+m}x_\ell^{i+j}c^{(\ell)}}_{=c_{i+j}}\biggr)
\]
to
\begin{alignat*}3
  \bigl(\sum_{i=0}^nx_{n+m}^ia_i\bigr)&\otimes\bigl(\sum_{k=0}^mx_{n+m}^kb_k\bigr)&&\otimes\sum_{\ell=0}^{n+m}c^{(\ell)}\\
  +{}\sum_{i=1}^n\biggl( a_i&\otimes\bigl(\sum_{k=0}^mx_{n+m}^kb_k\bigr)&&\otimes\sum_{\ell=0}^{n+m-1}(x_\ell^i-x_{n+m}^i)c^{(\ell)} \biggr)\\
  +{}\sum_{i=0}^n\sum_{j=1}^m\biggl(a_i&\otimes\bigl(\sum_{k=j}^mx_{n+m}^{k-j}b_k\bigr)&&\otimes\sum_{\ell=0}^{n+m-1}x_\ell^{i+j-1}(x_\ell-x_{n+m})c^{(\ell)} \biggr)
\end{alignat*}
\end{lemma}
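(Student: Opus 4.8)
The plan is to prove the lemma constructively: I would write down an explicit sequence of flips and then verify, by bookkeeping, that it carries the starting representation to the asserted one. Throughout, abbreviate $X:=x_{n+m}$ and, using Lemma~\ref{A}, regard the $(n+1)(m+1)$ summands as the cells of a grid indexed by pairs $(i,j)$ with $0\le i\le n$ and $0\le j\le m$, cell $(i,j)$ initially holding $a_i\otimes b_j\otimes c_{i+j}$. The flips are organized into two consecutive sweeps, a \emph{row sweep} followed by a \emph{column sweep}.

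In the row sweep I would fix each $i\in\{0,\dots,n\}$ in turn and, for $j=m,m-1,\dots,1$ in that order, apply the flip to cells $(i,j-1)$ and $(i,j)$ — which at that moment still share their first factor $a_i$ — with parameter $\lambda=-X$. The key claim is proved by a downward induction on $j$ (for $i$ fixed): after these $m$ flips the cells of row $i$ are
\[
a_i\otimes\Bigl(\textstyle\sum_{k=0}^m X^kb_k\Bigr)\otimes c_i
\quad\text{and}\quad
a_i\otimes\Bigl(\textstyle\sum_{k=j}^m X^{k-j}b_k\Bigr)\otimes(c_{i+j}-Xc_{i+j-1})
\]
for cell $(i,0)$ and for the cells $(i,j)$ with $1\le j\le m$ respectively. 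The mechanism is that at step $j$ the cell $(i,j-1)$ is the ``left'' partner, so only its middle factor changes there, gaining $X$ times the middle factor of $(i,j)$, which by then already equals $\sum_{k=j}^m X^{k-j}b_k$, so the next telescoped sum is produced; while $(i,j)$ is the ``right'' partner, so only its last factor changes there, from $c_{i+j}$ to $c_{i+j}-Xc_{i+j-1}$, using that the last factor of $(i,j-1)$ is still the untouched $c_{i+j-1}$ when step $j$ is executed. This sweep consists of $(n+1)m$ flips, and afterwards the representation is
\[
\textstyle\sum_{i=0}^n a_i\otimes\bigl(\sum_k X^kb_k\bigr)\otimes c_i
+\sum_{i=0}^n\sum_{j=1}^m a_i\otimes\bigl(\sum_{k=j}^m X^{k-j}b_k\bigr)\otimes(c_{i+j}-Xc_{i+j-1}).
\]

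In the column sweep, all cells $(i,0)$ now carry the same middle factor $\sum_k X^kb_k$, so flips among them are available; I would apply, for $i=1,2,\dots,n$ in turn, the flip to cells $(0,0)$ and $(i,0)$ with parameter $\lambda=-X^i$. Each such flip leaves the middle factors fixed, replaces the last factor of $(i,0)$ by $c_i-X^ic_0$, and adds $X^ia_i$ to the first factor of $(0,0)$, so after the $n$ flips the first factor of $(0,0)$ has become $\sum_{i=0}^n X^ia_i$. It then remains to rewrite the last factors through Lemma~\ref{A}: $c_0=\sum_{\ell=0}^{n+m}c^{(\ell)}$, and both $c_i-X^ic_0=\sum_{\ell=0}^{n+m-1}(x_\ell^i-X^i)c^{(\ell)}$ and $c_{i+j}-Xc_{i+j-1}=\sum_{\ell=0}^{n+m-1}x_\ell^{i+j-1}(x_\ell-X)c^{(\ell)}$, the summand $\ell=n+m$ dropping out in each case because $x_{n+m}=X$. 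This produces exactly the stated representation; the row sweep contributes $(n+1)m$ flips and the column sweep $n$, and the precise count $nm+(n-1)(m-1)+n$ in the statement would follow by grouping the flips of the row sweep more carefully in the final write-up.

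The step I expect to be the main obstacle is the bookkeeping of the row sweep: every interior cell is modified twice — once in its middle factor, as a left partner, and once in its last factor, as a right partner — so one must track the order of these modifications and make sure the running middle factor has already telescoped to $\sum_{k=j}^m X^{k-j}b_k$ before it is reused, and that the last factor of the left partner has not yet been altered when it is used to form the difference $c_{i+j}-Xc_{i+j-1}$. A subordinate point, which also constrains the construction, is verifying that each invoked flip is legal, i.e., that the two cells really share a factor at the moment it is applied: the unchanged first factor $a_i$ throughout the row sweep, and the common middle factor produced by the row sweep throughout the column sweep — which is exactly why the two sweeps must be carried out in this order and not the reverse.
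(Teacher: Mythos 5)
Your construction itself is sound, and it is essentially a streamlined variant of the paper's own argument: the paper also schedules explicit flips over the $(i,j)$ grid, and its third phase (flipping $(0,0)$ with each $(i,0)$ to build up $\sum_i x_{n+m}^i a_i$) is literally your column sweep; where you differ is that you telescope within each row by flipping adjacent cells $(i,j-1),(i,j)$, whereas the paper uses $(i,0)$ as a hub and then a second fix-up pass. Your bookkeeping checks out: the shared factor $a_i$ justifies every row-sweep flip, the order guarantees the middle factor of the right partner has already telescoped and the last factor of the left partner is still untouched, and the final rewriting via Lemma~\ref{A} (the $\ell=n+m$ summand vanishing because $x_{n+m}-x_{n+m}=0$) lands exactly on the displayed representation.

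The genuine gap is the flip count, which is what the lemma actually asserts. Your schedule uses $(n+1)m+n=nm+m+n$ flips, and the closing sentence that the stated bound $nm+(n-1)(m-1)+n$ ``would follow by grouping the flips of the row sweep more carefully'' is not an argument. The two counts satisfy $nm+m+n\le nm+(n-1)(m-1)+n$ only when $m\le(n-1)(m-1)$, so for $m=1$ or $n\le2$ your path is strictly longer than claimed; and for $n=m=1$ no regrouping can help, because the target differs from the standard representation in six factor positions (two in the $(0,0)$ term, two in $(1,0)$, one each in $(0,1)$ and $(1,1)$, none of them mere rescalings when $x_{n+m}\neq0$), while each flip alters only one factor in each of two terms, so at least three flips are needed against the claimed bound of two. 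In fairness, the paper's own accounting is also shaky here (its second phase is described as a loop over all $1\le j\le m$, $1\le i\le n$ yet counted as $(n-1)(m-1)$ flips), and your bound $nm+m+n$ is perfectly serviceable: it still yields an $O(nm(n+m))$ path length in Theorem~\ref{mainthm} after adjusting the recurrence for $F(n,m)$. But as a proof of Lemma~\ref{B} as stated, the quantitative claim is not established, and you should either prove the stated bound with a genuinely different schedule or restate the lemma with the count your construction actually achieves.
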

\begin{proof}
In order to be able to refer to terms in the tensor representations, let's number them.
We shall refer to the term $a_i\otimes b_j\otimes\sum_{\ell=0}^{n+m}x_\ell^{i+j}c^{(\ell)}$
as the $(i,j)$th term. While the terms get modified by flips, we maintain the numbering.

We apply several groups of flips.

First, for each $i$ from $1$ to $n$ and for each $j$ from $1$ to~$m$, apply a flip
to the $(i,0)$th and the $(i,j)$th term
in order to eliminate the term $x_{n+m}^{i+j}c^{(n+m)}$ from the third factor of the $(i,j)$th term.
This changes the second factors of the terms $(i,0)$ from $b_0$ to $b_0+b_1x_{n+m}+\cdots+b_mx_{n+m}^m$
and the third factors of the terms $(i,j)$ for $j>0$ from $\sum_{\ell=0}^{n+m}x_\ell^{i+j}c^{(\ell)}$
to $\sum_{\ell=0}^{n+m-1}(x_\ell^{i+j}-x_{n+m}^{i+j})c^{(\ell)}$.

Second, for each $j$ from $1$ to $m$ and each $i$ from $n$ down to~$1$, apply a flip
to the $(i,j)$th and the $(i-1,j)$th term in order to turn the third factor
$\sum_{\ell=0}^{n+m-1}(x_\ell^{i+j}-x_{n+m}^{i+j})c^{(\ell)}$ of the $(i,j)$th
term to $\sum_{\ell=0}^{n+m-1}x_\ell^{i+j-1}(x_\ell-x_{n+m})c^{(\ell)}$.
This has the side effect that $x_{m+n}$-fold of the second factor of the $(i,j)$th
term is added to the second factor of the $(i-1,j)$th term.
After all these flips have been performed, the second factor of the $(i,j)$th
term is $b_j+x_{n+m}b_{j+1}+\cdots+x_{n+m}^{m-j}b_m$.

Third, for each $i$ from $1$ to~$n$, apply a flip to the $(0,0)$th and the $(i,0)$th term
in order to change the third factor of the $(i,0)$th term from
$\sum_{\ell=0}^{n+m}x_\ell^i c^{(\ell)}$ to $\sum_{\ell=0}^{n+m-1}(x_\ell^i-x_{n+m}^i)c^{(\ell)}$.
All these flips change the first factor of the $(0,0)$th term to $a_0+x_{n+m}a_1+\cdots+x_{n+m}^na_n$.

At this point, we have reached the announced tensor representation.
We applied $nm$ flips in the first phase, $(n-1)(m-1)$ flips in the second phase,
and $n$ flips in the third phase. 
\end{proof}

\begin{lemma}\label{C}
It takes no more than $2n+1$ flips to get from
\begin{alignat*}1
  &\bigl(\sum_{i=0}^nx^ia_i\bigr)\otimes\bigl(\sum_{j=0}^mx^jb_j\bigr)\otimes(c_y + T_0)\\
  +{}&\sum_{i=1}^n\biggl( (y^i-x^i)a_i\otimes\bigl(\sum_{j=0}^mx^jb_j\bigr)\otimes(c_y + T_i)\biggr)\\
  +{}&\bigl(\sum_{i=0}^ny^ia_i\bigr)\otimes\sum_{i=1}^m(y^i-x^i)b_i\otimes c_y
\end{alignat*}
to
\begin{alignat*}1
  &\bigl(\sum_{i=0}^nx^ia_i\bigr)\otimes\bigl(\sum_{j=0}^mx^jb_j\bigr)\otimes T_0\\
  +{}&\sum_{i=1}^n\biggl((y^i-x^i)a_i\otimes(\sum_{j=0}^mx^jb_j)\otimes T_i\biggr)\\
  +{}&\bigl(\sum_{i=0}^ny^ia_i\bigr)\otimes\bigl(\sum_{j=0}^my^jb_j\bigr)\otimes c_y.
\end{alignat*}
\end{lemma}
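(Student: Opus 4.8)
The plan is to label the $n+2$ summands: for $0\le i\le n$, call \emph{term $i$} the one whose third tensor factor is $c_y+T_i$, and call \emph{term $*$} the remaining one. Abbreviate $u_0:=\sum_{i=0}^n x^i a_i$, $u_i:=(y^i-x^i)a_i$ for $1\le i\le n$, $u_*:=\sum_{i=0}^n y^i a_i$, and $v:=\sum_{j=0}^m x^j b_j$. Two elementary identities, both obtained by expanding and cancelling, drive the whole argument: $u_0+u_1+\cdots+u_n=u_*$, and $v+\sum_{i=1}^m(y^i-x^i)b_i=\sum_{j=0}^m y^j b_j$. The terms $0,\dots,n$ all share the middle factor $v$; this is the set of terms on which the first and last phases operate, the intermediate goal being to bring term $n$'s first factor to $u_*$ so that it can then be flipped with term $*$.

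The construction has three phases. \textbf{Phase 1} ($n$ flips): for $i=1,2,\dots,n$ in this order, flip term $i$ against term $i-1$ --- this is the flip adding the first factor of term $i-1$ into the first factor of term $i$ while subtracting the third factor of term $i$ from the third factor of term $i-1$. A one-line induction on $i$ gives that afterwards term $i$ has first factor $u_0+\cdots+u_i$ and third factor $T_i-T_{i+1}$ for $0\le i\le n-1$, while term $n$ has first factor $u_0+\cdots+u_n=u_*$ and still has third factor $c_y+T_n$; term $*$ is untouched, and all middle factors are still $v$. \textbf{Phase 2} ($1$ flip): since term $n$ and term $*$ now share the first factor $u_*$, flip them, adding term $n$'s middle factor $v$ into term $*$'s middle factor and subtracting term $*$'s third factor $c_y$ from term $n$'s. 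By the second identity, term $*$ becomes $u_*\otimes\bigl(\sum_{j=0}^m y^j b_j\bigr)\otimes c_y$, exactly as required, and term $n$ becomes $u_*\otimes v\otimes T_n$. \textbf{Phase 3} ($n$ flips): repair the first and third factors of terms $0,\dots,n$. For $i=n,n-1,\dots,2$ in this order, flip term $i$ against term $i-1$, this time subtracting the first factor of term $i-1$ from the first factor of term $i$ and adding the third factor of term $i$ into the third factor of term $i-1$; finally, flip term $1$ against term $0$ in the same way. Tracking the factors shows that each term $i$ ends up as $u_i\otimes v\otimes T_i$, which is the target. The flip counts total $n+1+n=2n+1$.

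The only genuine work is the bookkeeping, and the choice of processing order is precisely what makes it go through. For every flip in Phases 1 and 3 one has to confirm that the two terms still share their middle factor --- true because those flips touch only first and third factors, and the Phase 2 flip touches only term $*$'s middle factor and term $n$'s third factor --- and that, whenever a term enters a flip as the passive partner, its first and third factors are still in the state asserted above. Processing left to right in Phase 1 and right to left in Phase 3 is exactly what makes the telescoping cancellations $(T_{i-1}-T_i)+T_i=T_{i-1}$ and $(u_0+\cdots+u_i)-(u_0+\cdots+u_{i-1})=u_i$ come out cleanly; I do not expect any obstacle beyond keeping this indexing straight.
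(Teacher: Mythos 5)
Your proposal is correct and follows essentially the same strategy as the paper's own proof: assemble the first factors so that one of the terms sharing the middle factor $\sum_j x^jb_j$ acquires the first factor $\sum_{i=0}^n y^ia_i$, flip that term against the last summand to turn its middle factor into $\sum_{j=0}^m y^jb_j$ while stripping $c_y$ from the third factor, and then undo the assembly, for a total of $2n+1$ flips. The only difference is cosmetic: you accumulate along a chain of prefix sums (intermediate third factors $T_i-T_{i+1}$, pivot at term $n$), whereas the paper accumulates everything into term $1$ (intermediate third factors $T_i-T_1$), with identical flip counts.
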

\begin{proof}
Number the terms in the tensor representation from $0$ to $n+1$, so that, in particular,
the $i$th term in this numbering is also the $i$th of the sum in the second line.

First apply a flip to the $i$th and the $1$st term, for $i=2,\dots,n$, so that the
first factor of the $1$st term becomes $\sum_{i=1}^n (y^i-x^i)a_i$ and the third
factor of the $i$th term ($i\geq2$) becomes $T_i-T_1$. This takes $n-1$ flips.

Next, apply a flip to the $0$th and the $1$st term, so that the first factor of the
$1$st term becomes $\sum_{i=0}^n y^ia_i$ and the third factor of the $0$th term
becomes $T_0-T_1$. This takes $1$ flip.

Next, apply a flip to the $1$st term and the $(n+1)$st term, so that the second
factor of the $n+1$st term becomes $\sum_{j=0}^my^jb_j$ and the third factor of
the $1$st term becomes $-c_y+T_1$. This takes $1$ flip.

Next, apply a flip to the $0$th and the $1$st term, so that the third factor of the
$0$st term becomes $T_0$ and the first factor of the $1$st term becomes $\sum_{i=1}^n (y^i-x^i)a_i$.
This takes $1$ flip.

Finally, apply a flip to the $i$th and the $1$st term, for $i=2,\dots,n$, so that
the first factor of the $1$st term becomes $(y-x)a_1$ and the third factor of the $i$th
term ($i\geq2$) becomes~$T_i$. This takes $n-1$ flips.

At this point, we have reached the announced tensor representation.
Altogether, we have made $(n-1)+1+1+1+(n-1)=2n+1$ flips.
\end{proof}

\begin{proof}[Proof of Thm.~\ref{mainthm}]
We apply induction on $n+m$. The case $n=m=0$ serves as induction base. In this case,
the tensor in question is just $a_0\otimes b_0\otimes c_0$. This is the tensor representation
in the beginning and in the end, so there obviously is a path connecting them, and it
consists of zero flips.

Now let $n,m\geq0$ be given and assume that the theorem is true for all $n',m'$ with
$n'+m'<n+m$. Because of symmetry, we may assume that $m\geq n$. Note that this implies
$m\geq1$.

We first apply the flips of Lemma~\ref{B}. After introducing new variables
\begin{alignat*}3
  \tilde b_j &= b_{j+1} + x_{n+m}b_{j+2} + \cdots + x_{n+m}^{m-j}b_m&\quad&(j=0,\dots,m-1)\\
  \tilde c^{(\ell)} &= (x_\ell - x_{n+m})c^{(\ell)}&&(\ell=0,\dots,n+m-1),
\end{alignat*}
the resuting tensor representation has the following form:
\begin{alignat}1
  &\bigl(\sum_{i=1}^nx_{n+m}^ia_i\bigr)\otimes\bigl(\sum_{j=0}^mx_{n+m}^jb_j\bigr)\otimes\sum_{\ell=0}^{n+m}c^{(\ell)}\notag\\
  +{}&\sum_{i=1}^n\biggl( a_i\otimes\bigl(\sum_{j=0}^mx_{n+m}^jb_j\bigr)\otimes\sum_{\ell=0}^{n+m-1}(x_\ell^i-x_{n+m}^i)c^{(\ell)} \biggr)\label{eq:1}\\
  +{}&\sum_{i=0}^n\sum_{j=0}^{m-1}\biggl(a_i\otimes \tilde b_j\otimes\sum_{\ell=0}^{n+m-1}x_\ell^{i+j-1}\tilde c^{(\ell)} \biggr).\notag
\end{alignat}
By induction hypothesis, the double sum in the third line of~\eqref{eq:1} can be turned into
\[
\sum_{\ell=0}^{n+m-1}\biggl(\bigl(\sum_{i=0}^nx_\ell^ia_i\bigr)\otimes\bigl(\sum_{j=0}^{m-1}x_\ell^j\tilde b_j\bigr)
\otimes\tilde c^{(\ell)}\biggr).
\]
Undoing the change of variables yields
\begin{alignat*}1
  &\sum_{\ell=0}^{n+m-1}\biggl(\bigl(\sum_{i=0}^nx_\ell^ia_i\bigr)
  \otimes\sum_{j=0}^{m-1}\sum_{i=0}^j x_\ell^jx_{n+m}^{i-j}b_{j+1}
  \otimes(x_\ell-x_{n+m}) c^{(\ell)}\biggr)\\
  &=\sum_{\ell=0}^{n+m-1}\biggl(\bigl(\sum_{i=0}^nx_\ell^ia_i\bigr)
  \otimes\sum_{j=1}^m(x_\ell^j - x_{m+n}^j)b_j
  \otimes c^{(\ell)}\biggr).
\end{alignat*}
We replace the double sum in the third row of \eqref{eq:1} by this expression, and in each summand
of the sum in the second row, we move a factor of $(x_0^i-x_{n+m}^i)$ from the third factor to the first.
This transforms the tensor representation of \eqref{eq:1} into
\begin{alignat*}1
  &\bigl(\sum_{i=0}^nx_{n+m}^ia_i\bigr)\otimes\bigl(\sum_{j=0}^mx_{n+m}^jb_j\bigr)\otimes\bigl(c^{(0)} + \sum_{\ell=1}^{n+m}c^{(\ell)}\bigr)\\
  +{}&\sum_{i=1}^n\biggl( (x_0^i-x_{n+m}^i)a_i\otimes\bigl(\sum_{j=0}^mx_{n+m}^jb_j\bigr)\otimes\bigl(c^{(0)}+ \sum_{\ell=1}^{n+m}\frac{x_\ell^i-x_{n+m}^i}{x_0^i-x_{n+m}^i}c^{(\ell)}\bigr)\biggr)\kern-7pt\null\\
  +{}&\sum_{\ell=0}^{n+m-1}\biggl(\bigl(\sum_{i=0}^nx_\ell^ia_i\bigr)
  \otimes\sum_{j=1}^m(x_\ell^j - x_{m+n}^j)b_j
  \otimes c^{(\ell)} \biggr).
\end{alignat*}
Now we are in a position to apply Lemma~\ref{C}, with $x_{n+m}$, $x_0$, $c^{(0)}$, $\sum_{\ell=1}^{n+m}c^{(\ell)}$,
and $\sum_{\ell=1}^{n+m}\frac{x_\ell^i-x_{n+m}^i}{x_0^i-x_{n+m}^i}c^{(\ell)}$ in the roles
of $x, y, c_y, T_0, T_1$, respectively. At the cost of $2n+1$ flips, we arrive at
\begin{alignat*}1
  &\bigl(\sum_{i=0}^nx_{n+m}^ia_i\bigr)\otimes\bigl(\sum_{j=0}^mx_{n+m}^jb_j\bigr)\otimes\bigl(\sum_{\ell=1}^{n+m}c^{(\ell)}\bigr)\\
  +{}&\sum_{i=1}^n\biggl( (x_0^i-x_{n+m}^i)a_i\otimes\bigl(\sum_{j=0}^mx_{n+m}^jb_j\bigr)\otimes\bigl(\sum_{\ell=1}^{n+m-1}\frac{x_\ell^i-x_{n+m}^i}{x_0^i-x_{n+m}^i}c^{(\ell)}\bigr)\biggr)\\
  +{}&\bigl(\sum_{i=0}^nx_0^ia_i\bigr)\otimes\bigl(\sum_{j=0}^mx_0^jb_j\bigr)\otimes c^{(0)}\\
  +{}&\sum_{\ell=1}^{n+m-1}\biggl(\bigl(\sum_{i=0}^nx_\ell^ia_i\bigr)
  \otimes\sum_{j=1}^{m}(x_\ell^j - x_{m+n}^j)b_j
  \otimes c^{(\ell)} \biggr).
\end{alignat*}
Next, we move a factor of $(x_1^i-x_{n+m}^i)$ from the third factor to the first factor in each
summand of the sum in the second row, so that we can apply Lemma~\ref{C} again. After altogether
$n+m-1$ applications of Lemma~\ref{C}, we arrive at
\begin{alignat*}1
  &\bigl(\sum_{i=0}^nx_{n+m}^ia_i\bigr)\otimes\bigl(\sum_{j=0}^mx_{n+m}^jb_j\bigr)\otimes(c^{(n+m-1)} + c^{(n+m)})\\
  +{}&\sum_{i=1}^n\biggl( (x_{n+m-1}^i-x_{n+m}^i)a_i\otimes\bigl(\sum_{j=0}^mb_jx_{n+m}^j\bigr)\otimes c^{(n+m-1)} \biggr)\\
  +{}&\sum_{\ell=0}^{n+m-2}\biggl(\bigl(\sum_{i=0}^nx_\ell^ia_i\bigr)\otimes\bigl(\sum_{j=0}^mx_\ell^jb_j\bigr)\otimes c^{(\ell)}\biggr)\\
  +{}&\bigl(\sum_{i=0}^na_ix_{n+m-1}^i\bigr)\otimes\sum_{j=1}^m(x_{n+m-1}^j - x_{m+n}^j)b_j \otimes c^{(n+m-1)}.
\end{alignat*}
Using $n-1$ reductions, the sum in the second line can be merged into a single term
\[
 \sum_{i=1}^n(x_{n+m-1}^i-x_{n+m}^i)a_i\otimes\bigl(\sum_{j=0}^mx_{n+m}^jb_j\bigr)\otimes c^{(n+m-1)},
\]
where the summation sign is now understood as part of the first factor. Flip this line with the first line
to obtain
\begin{alignat*}1
  &\bigl(\sum_{i=0}^nx_{n+m}^ia_i\bigr)\otimes\bigl(\sum_{j=0}^mx_{n+m}^jb_j\bigr)\otimes c^{(n+m)}\\
  +{}&\bigl(\sum_{i=0}^nx_{n+m-1}^ia_i\bigr)\otimes\bigl(\sum_{j=0}^mx_{n+m}^jb_j\bigr)\otimes c^{(n+m-1)}\\
  +{}&\sum_{\ell=0}^{n+m-2}\biggl((\sum_{i=0}^nx_\ell^ia_i)\otimes(\sum_{j=0}^mx_\ell^jb_j)\otimes c^{(\ell)}\biggr)\\
  +{}&\bigl(\sum_{i=0}^nx_{n+m-1}^ia_i\bigr)\otimes\bigl(\sum_{j=1}^m(x_{n+m-1}^j - x_{m+n}^j)b_j\bigr)\otimes c^{(n+m-1)}.
\end{alignat*}
Finally, we apply a reduction to the second and the last row to arrive at the desired tensor representation.

If $F(n,m)$ denotes the total number of flips we used, then we have
\begin{alignat*}1
  F(n,m) &= \overbrace{nm+(n-1)(m-1)+n}^{\text{Lemma~\ref{B}}} + \overbrace{F(n,m-1)}^{\text{Ind.Hyp.}}\\
  &\qquad{}+(n+m-1)\times\underbrace{(2n+1)}_{\text{Lemma~\ref{C}}} + 1.
\end{alignat*}
It can be easily checked that $m n (2n+2m+1)$ matches this recurrence and the initial values, so
we have $F(n,m)=mn(2n+2m+1)$ for all $n,m$.

It is also clear that there are altogether $nm$ reductions, because we start from a tensor of length
$(n+1)(m+1)$ and we finish with a tensor of length $n+m-1$, and
$nm=(n+1)(m+1)-(n+m-1)$. This completes the proof.
\end{proof}

\section{Small Fields}\label{sec:chartwo}

Unlike Karatsuba's algorithm, which works for every ground domain, the Toom-Cook representations
of the polynomial multiplication tensor only exist when the ground field is sufficiently large:
Thm.~\ref{mainthm} requires $K$ to contain at least $n+m+1$ distinct elements that can serve as base
points for the evaluation and interpolation.

What happens if $K$ is smaller, for example, for $K=\set Z_2$? We have adapted the software used
by Kauers and Moosbauer~\cite{kauers23f} to the polynomial case and used it to search for low rank
representations of the polynomial multiplication tensor for this case.
We also included some of the improvements proposed in~\cite{adaptive}.
The results are given in the following table. For example, for the degrees $n=3$ and $m=4$
we found a representation of rank~$12$.

\begin{center}
\def\entry#1#2#3#4{\draw(#1,#2)node{#3\rlap{$^{#4}$}};}
\begin{tikzpicture}[yscale=-.5,scale=.67]
\foreach\i in{1,2,3,4,5,6,7,8,9,10} {\draw(\i,0)node{\textit{\i}};}
\foreach\i in{1,2,3,4,5,6,7,8,9,10} {\draw(11,\i)node{\textit{\i}};}
\entry{1}{1}{3}{*}
\entry{2}{1}{5}{*}
\entry{2}{2}{6}{*}
\entry{3}{1}{6}{*}
\entry{3}{2}{8}{*}
\entry{3}{3}{9}{*}
\entry{4}{1}{8}{*}
\entry{4}{2}{10}{*}
\entry{4}{3}{12}{*}
\entry{4}{4}{13}{*}
\entry{5}{1}{9}{*}
\entry{5}{2}{11}{*}
\entry{5}{3}{13}{*}
\entry{5}{4}{16}{*}
\entry{5}{5}{17}{+}
\entry{6}{1}{11}{*}
\entry{6}{2}{13}{*}
\entry{6}{3}{15}{*}
\entry{6}{4}{18}{*}
\entry{6}{5}{20}{*}
\entry{6}{6}{22}{+}
\entry{7}{1}{12}{*}
\entry{7}{2}{15}{*}
\entry{7}{3}{17}{*}
\entry{7}{4}{19}{*}
\entry{7}{5}{22}{+}
\entry{7}{6}{24}{\cdot}
\entry{7}{7}{26}{\cdot}
\entry{8}{1}{14}{*}
\entry{8}{2}{17}{+}
\entry{8}{3}{19}{+}
\entry{8}{4}{22}{+}
\entry{8}{5}{24}{+}
\entry{8}{6}{26}{\cdot}
\entry{8}{7}{28}{\cdot}
\entry{8}{8}{32}{}
\entry{9}{1}{15}{*}
\entry{9}{2}{19}{\cdot}
\entry{9}{3}{21}{*}
\entry{9}{4}{22}{}
\entry{9}{5}{26}{\cdot}
\entry{9}{6}{28}{}
\entry{9}{7}{32}{}
\entry{9}{8}{36}{}
\entry{9}{9}{39}{}
\entry{10}{1}{17}{*}
\entry{10}{2}{21}{\cdot}
\entry{10}{3}{23}{\cdot}
\entry{10}{4}{25}{\cdot}
\entry{10}{5}{28}{\cdot}
\entry{10}{6}{30}{}
\entry{10}{7}{37}{}
\entry{10}{8}{52}{}
\entry{10}{9}{64}{}
\entry{10}{10}{66}{}
\end{tikzpicture}
\end{center}

We tried to extend the schemes we found for the field $\set Z_2$ using Hensel lifting to schemes with
coefficients in~$\set Z_{2^{20}}$. Then we applied rational reconstruction to obtain schemes with coefficients
in $\set Q$, or, ideally, with coefficients in~$\set Z$. Schemes with coefficients in $\set Z$ are of
particular interest because they apply to every ground ring, in particular to other small fields
like $\set Z_3$ or~$\set Z_5$. Schemes with coefficients in $\set Q$ only apply to other fields whose
characteristic does not divide the denominator of any coefficient.

For the schemes marked with a star, the extension to integer coefficients was successful.
For the schemes marked with a plus, rational reconstruction led to coefficients in $\set Z[\frac1{105}]$.
This means that these schemes apply to fields with characteristic other than $3,5,7$.
For the schemes marked with a dot, we were able to lift the coefficients to $\set Z_{2^{20}}$, but
not to~$\set Q$. The schemes without any decoration did not admit Hensel lifting.

The first two rows of the table are easily explained. 

\begin{thm}
  \begin{enumerate}
  \item 
  In the flip graph for polynomial multiplication over an arbitrary field there
  is a path that leads from the standard representation for degrees $n$ and $1$
  to a representation of rank~$\lceil\frac32(n+1)\rceil$.
  \item
  In the flip graph for polynomial multiplication over an arbitrary field there 
  is a path that leads from the standard representation for degrees $n\geq5$ and $2$
  to a representation of rank~$2n+1$.
  \end{enumerate}
\end{thm}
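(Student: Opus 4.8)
The plan is to prove the two statements separately, each by exhibiting an explicit sequence of flips and reductions that starts at the standard representation. Unlike Thm.~\ref{mainthm}, these constructions must stay valid over small fields such as $\set Z_2$, so they cannot use evaluation/interpolation; instead they are assembled entirely from the Karatsuba gadget of the introduction, all of whose flip parameters lie in $\{0,\pm1\}$ and hence make sense over every field. In particular the resulting paths exist over an arbitrary ground field, as claimed.

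\textbf{Part~(1): degree $n$ times degree $1$.} I would partition the index set $\{0,\dots,n\}$ of the $a$-variables into the consecutive pairs $\{0,1\},\{2,3\},\dots$, with a single leftover index $\{n\}$ when $n$ is even. Fix a pair $\{2k,2k+1\}$. The four standard terms whose $a$-index lies in it are
\[
 a_{2k}\otimes b_0\otimes c_{2k},\ a_{2k}\otimes b_1\otimes c_{2k+1},\ a_{2k+1}\otimes b_0\otimes c_{2k+1},\ a_{2k+1}\otimes b_1\otimes c_{2k+2},
\]
which is an isomorphic copy of the $n=m=1$ multiplication tensor (match $a_{2k},a_{2k+1}$ with $a_0,a_1$ and $c_{2k},c_{2k+1},c_{2k+2}$ with $c_0,c_1,c_2$). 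The two flips and one reduction of the introduction turn these into the three terms $a_{2k}\otimes b_0\otimes(c_{2k}-c_{2k+1})$, $(a_{2k}+a_{2k+1})\otimes(b_0+b_1)\otimes c_{2k+1}$, $a_{2k+1}\otimes b_1\otimes(c_{2k+2}-c_{2k+1})$. The crucial observation is that these operations only ever combine terms whose $a$-index lies in $\{2k,2k+1\}$, so the pairs are processed independently and in any order. Each two-element block drops from $4$ to $3$ terms, and the leftover singleton (if any) keeps its $2$ terms, so the final rank is $3\lfloor(n+1)/2\rfloor+2\varepsilon$, where $\varepsilon=1$ if $n$ is even and $\varepsilon=0$ if $n$ is odd; a one-line case check identifies this with $\lceil\frac32(n+1)\rceil$.

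\textbf{Part~(2): degree $n\ge5$ times degree $2$.} Here I would split the degree-$2$ factor as $b_0\cdot 1 + x\,(b_1+b_2x)$. Relabelling $b_1,b_2$ as $b_0',b_1'$ and $c_1,c_2,\dots$ as $c_0',c_1',\dots$, the part of the standard tensor carried by $b_1,b_2$ becomes an isomorphic copy of the degree-$n$ times degree-$1$ standard tensor, to which part~(1) applies, while the part carried by $b_0$ is the ``scalar'' tensor $\sum_{i=0}^n a_i\otimes b_0\otimes c_i$ of rank $n+1$. After running the part~(1) construction on the first piece one is left with $\lceil\frac32(n+1)\rceil+(n+1)$ terms whose third components are supported on heavily overlapping ranges of $c$-indices, the overlap comprising $c_1,\dots,c_n$. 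The plan is then to sweep a cascade of flips along this overlap: each flip adjusts one term from the scalar piece together with one from the part~(1) piece so as to make a reduction applicable to them, lowering the rank by one; one continues as long as the overlap still supplies a pair that can be treated this way. The number of merges this cascade can carry out is governed by the length of the overlap, and the point is that it suffices to reach rank $2n+1$ exactly when $n\ge5$; for $n\le4$ the overlap is too short, the cascade stalls one step early, and one only reaches $2n+2$, which matches the table. (An alternative organisation: prove a rank-$11$ flip/reduction path for the degree-$5$ times degree-$2$ tensor by hand, together with a lemma converting any rank-$(2n+1)$ representation of the degree-$n$ times degree-$2$ tensor into a rank-$(2n+3)$ representation of the degree-$(n{+}1)$ times degree-$2$ tensor by appending the three standard terms of the new $a_{n+1}$, flipping, and reducing; then induct upward from $n=5$.)

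\textbf{Main obstacle.} The delicate part is the bookkeeping in part~(2): tracking the precise shape of all terms as the overlap cascade proceeds, and showing both that it reaches rank $2n+1$ for $n\ge5$ and that it cannot for smaller $n$ — or, in the inductive formulation, constructing the base case $n=5$ explicitly and checking that the inductive step meshes with the non-standard shape of the intermediate representation. Part~(1), by contrast, is essentially immediate once the pairing is set up, since it is a disjoint union of copies of the worked $n=m=1$ example.
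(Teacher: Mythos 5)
Your part~(1) is correct and amounts to the same idea as the paper's, packaged differently: the paper proceeds by induction, absorbing one or two new $a$-indices per step and using the Karatsuba gadget for the two-index step (with the small cases taken from the computer search), while you tile $\{0,\dots,n\}$ by disjoint pairs and run the gadget on each block simultaneously. The block-independence observation is sound, the count $3\lfloor(n+1)/2\rfloor+2\varepsilon=\lceil\frac32(n+1)\rceil$ is right, and your version is if anything more self-contained, since it needs no computed base cases and is visibly valid over every field.

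Part~(2), however, has a genuine gap. Your main route ends in an unspecified ``cascade of flips'': after processing the $(b_1,b_2)$ block you sit at $\lceil\frac32(n+1)\rceil+(n+1)$ terms and must still perform about $\lceil\frac32(n+1)\rceil-n\approx n/2$ merges between the scalar block $\sum_i a_i\otimes b_0\otimes c_i$ and the Karatsuba terms, yet you never exhibit even one such merge, and it is not clear one exists: a reduction requires agreement in two of the three slots, the two blocks share no $b$-components, and a single flip generically fixes at most one slot. The claim that the cascade reaches exactly $2n+1$ when $n\ge5$ and ``stalls'' for $n\le4$ is precisely what has to be proved, not an observation. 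Your alternative $+1$ induction is also unsubstantiated: the three appended terms $a_{n+1}\otimes b_j\otimes c_{n+1+j}$ form a rank-$3$ subtensor, so the one reduction you need must involve the old terms, about whose shape the hypothesis ``any rank-$(2n+1)$ representation'' says nothing; the lemma cannot follow from local moves on the appended terms alone. The paper sidesteps all of this by choosing the step size so that the appended block can be compressed without touching the existing representation: it takes the computed (and integrally lifted, hence field-independent) rank-$(2n+1)$ paths for $n=5,6,7$ as base cases, and for $n\to n+3$ appends the nine standard terms in $a_{n+1},a_{n+2},a_{n+3}$, which form a copy of the degree-two-by-degree-two tensor on which the known rank-$6$ flip path can be run entirely within those nine terms; since $2(n+3)+1=(2n+1)+6$, the induction closes. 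Some such external input (explicit schemes for the base cases and for that block) is what your proposal is missing — in particular a hand-built, field-independent rank-$11$ path for degrees $(5,2)$ is itself a nontrivial piece you have not supplied.
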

\begin{proof}
  We proceed by induction on~$n$.
  \begin{enumerate}
  \item 
    For small~$n$, the claim is confirmed by the computation.
    The large~$n$, the claim follows from the observations that increasing $n$ by $1$ raises the rank by no more than $2$
    and increasing $n$ by $2$ raises the rank by no more than~$3$.

    The first observation simply follows from
    \begin{alignat*}1
    &\sum_{i=0}^{n+1}\sum_{j=0}^1 a_i\otimes b_j\otimes c_{i+j}\\
    &=\sum_{i=0}^n\sum_{j=0}^1 a_i\otimes b_j\otimes c_{i+j}\\
    &+a_{n+1}\otimes b_0\otimes c_{n+1}+a_{n+1}\otimes b_1\otimes c_{n+2},
    \end{alignat*}
    and the second observation follows from
    \begin{alignat*}1
    &\sum_{i=0}^{n+2}\sum_{j=0}^1 a_i\otimes b_j\otimes c_{i+j}\\
    &=\sum_{i=0}^n\sum_{j=0}^1 a_i\otimes b_j\otimes c_{i+j}\\
    &\quad+a_{n+1}\otimes b_0\otimes c_{n+1}
    +a_{n+1}\otimes b_1\otimes c_{n+2}\\
    &\quad+a_{n+2}\otimes b_0\otimes c_{n+2}
    +a_{n+2}\otimes b_1\otimes c_{n+3}\\
    &=\sum_{i=0}^n\sum_{j=0}^1 a_i\otimes b_j\otimes c_{i+j}\\
    &\quad + a_{n+1}\otimes b_0\otimes (c_{n+1}-c_{n+2})\\
    &\quad + (a_{n+1}+a_{n+2})\otimes(b_0+b_1)\otimes c_{n+2}\\
    &\quad + a_{n+2}\otimes b_1\otimes (c_{n+3}-c_{n+2}).
    \end{alignat*}
  \item
    For small~$n$, the claim is confirmed by the computation.
    The large~$n$, the claim follows from the observation that increasing $n$ by $3$ raises the rank by no more than~$6$.
    This follows from
    \begin{alignat*}1
      &\sum_{i=0}^{n+3}\sum_{j=0}^2 a_i\otimes b_j\otimes c_{i+j}\\
      &=\sum_{i=0}^{n}\sum_{j=0}^2 a_i\otimes b_j\otimes c_{i+j}
    +\sum_{i=n+1}^{n+3}\sum_{j=0}^2 a_i\otimes b_j\otimes c_{i+j}
    \end{alignat*}
    and the fact that the multiplication of two quadratic polynomials has rank $6$ according to the computation.
  \end{enumerate}
\end{proof}

Flip graphs are useful for finding low-rank tensor representations, but it is not clear how to use
the technique for checking whether an optimum has been reached. It could always be that we fail to
find a representation with a lower rank because either the better scheme is not reachable without
splits or they are too well hidden in the graph. For example, according to the
table above, for $K=\set Z_2$ and $n=m=2$ we only find a representation of rank~$6$ using flip
graphs, while for larger fields, there is a representation of rank~$5$. Is the result found by the
flip graph search optimal for $K=\set Z_2$, or did we miss a better scheme?

In order to answer this question, we can search for a scheme of rank $5$ by solving a nonlinear
system. Consider a representation with undetermined coefficients,
\[
\sum_{\ell=1}^5 \biggl(\sum_{i=0}^2 \alpha_{\ell,i}a_0\biggr)\otimes
\biggl(\sum_{j=0}^2 \beta_{\ell,j}b_j\biggr)\otimes
\biggl(\sum_{k=0}^4 \gamma_{\ell,k}c_k\biggr),
\]
equate it to the polynomial multiplication tensor
\[
  \sum_{i=0}^2\sum_{j=0}^2 a_i\otimes b_j\otimes c_{i+j},
\]
and compare coefficients. This leads to the system of equations
\[
\sum_{\ell=1}^5 \alpha_{\ell,i}\beta_{\ell,j}\gamma_{\ell,k} = \delta_{i+j,k}
\]
for $i,j=0,1,2$ and $k=0,\dots,4$, where $\delta$ refers to the Kronecker delta.
The analogous equations for the matrix multiplication tensor are known as the Brent equations~\cite{Br:Afmm}.
In~\cite{HKS:Nwtm}, Heule, Kauers, and Seidl found many new representations of the matrix multiplication
tensor for $3\times3$ matrices by translating the equations into a boolean formula and solving it
using a SAT solver. Here we used a SAT solver to prove that for $K=\set Z_2$ and $n=m=2$, there is no
representation of rank~$5$. Some further representations can be proven to be optimal in the same way:

\begin{thm}
  For $K=\set Z_2$, the flip graph for the polynomial multiplication tensor
  has a path from the standard representation to an representation of minimal
  rank for every
  \[
   (n,m)\in\{(1,1),(1,2),(1,3),(1,4),(1,5),(2,2),(2,3),(2,4),(3,3)\}.
  \]
\end{thm}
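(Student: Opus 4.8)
The plan is to treat the nine pairs $(n,m)$ one at a time, in each case matching an upper bound that comes from an explicit path in the flip graph against a lower bound on the tensor rank over~$\set Z_2$. Since the two bounds will coincide with the value recorded in the table, the path ends at a minimal-rank representation, which is exactly what the theorem asserts.

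For the five pairs in which one of the two degrees equals~$1$, namely $(1,1),(1,2),(1,3),(1,4),(1,5)$, the path is already supplied by Part~1 of the preceding theorem: it leads from the standard representation to a representation of rank $\lceil\frac32(d+1)\rceil$, where $d$ is the other degree, which gives the values $3,5,6,8,9$, i.e.\ exactly the entries recorded in the table. For the remaining pairs $(2,2),(2,3),(2,4),(3,3)$, the flip-graph search reported in this section produces, starting from the standard representation, representations of ranks $6,8,10,9$; the sequence of flips and reductions performed by that search, recorded as a certificate, is itself the required path. So no new work is needed for the upper bounds; the content is entirely in the matching lower bounds.

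To prove the lower bound for a given pair, one must show that over~$\set Z_2$ the polynomial multiplication tensor for degrees $n$ and $m$ cannot be written with fewer than $r$ rank-one terms, where $r$ is the value listed above. Following the recipe already sketched for $(2,2)$, I would, for each pair and for $R=r-1$, write down a generic representation $\sum_{\ell=1}^{R}\bigl(\sum_i\alpha_{\ell,i}a_i\bigr)\otimes\bigl(\sum_j\beta_{\ell,j}b_j\bigr)\otimes\bigl(\sum_k\gamma_{\ell,k}c_k\bigr)$ with unknowns ranging over~$\set Z_2$, equate coefficients with the standard representation to obtain the cubic system $\sum_{\ell=1}^{R}\alpha_{\ell,i}\beta_{\ell,j}\gamma_{\ell,k}=\delta_{i+j,k}$, translate it into a Boolean formula (introducing one auxiliary variable for each monomial $\alpha_{\ell,i}\beta_{\ell,j}\gamma_{\ell,k}$ and one XOR constraint for each equation, exactly as the Brent equations are encoded in~\cite{HKS:Nwtm}), and hand it to a SAT solver. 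Unsatisfiability certifies that no representation of rank~$R$ exists, hence that the path constructed above ends at an optimal representation. To keep the argument independent of trust in the solver, I would have it emit a DRAT refutation and check it with an independent verifier.

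The main obstacle is purely computational rather than conceptual: the SAT instances grow quickly with $n+m$ and~$R$, so for the larger pairs $(2,4)$ and $(3,3)$ a naive encoding may be out of reach and one may need symmetry-breaking clauses, for instance quotienting by the obvious symmetries of the polynomial multiplication tensor or normalizing one rank-one summand, before the instances become tractable. A minor point that still has to be stated carefully is that solvability of the Brent-type system really is equivalent to the existence of a rank-$R$ representation over~$\set Z_2$: this holds precisely because the $\alpha$'s, $\beta$'s, and $\gamma$'s are allowed to range over all of~$\set Z_2$, so that the rank-one factors need not be basis vectors.
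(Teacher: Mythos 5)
Your proposal is correct and follows essentially the same route as the paper: the upper bounds (and the required paths) come from the flip-graph search itself, with the $m=1$ cases also covered by the preceding theorem, and the matching lower bounds over $\set Z_2$ are established by encoding the Brent-type equations $\sum_{\ell}\alpha_{\ell,i}\beta_{\ell,j}\gamma_{\ell,k}=\delta_{i+j,k}$ for rank $r-1$ and certifying unsatisfiability with a SAT solver. The only differences are presentational (explicit mention of DRAT certificates and symmetry breaking), not mathematical.
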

\begin{proof}
  For all these pairs $(n,m)$, we were able to prove with a SAT solver that there is no
  representation of smaller rank than what was found by the flip graph search as indicated
  in the table above. 
\end{proof}

We do not believe that all the rank bounds for higher degrees reported in the table are tight.
Like in the case of matrix multiplication~\cite{kauers23f,kauers24c,adaptive}, searching in
the flip graph becomes more and more cumbersome with increasing tensor size. 

\section{Conclusion}

We have seen that the concept of flip graphs also works well for the polynomial multiplication
tensor. If the ground field is sufficiently large, it is capable of finding, at least in principle,
an optimal representation starting from the standard representation in a relatively small number
of steps.

For fields with more than two elements, we also have to take into account that constant factors
can be freely moved between the components of a rank-one tensor: $(\alpha u)\otimes v\otimes w
=u\otimes(\alpha v)\otimes w=u\otimes v\otimes(\alpha w)$. We made free use of this relation in
the construction of the path. A search engine that follows a random path in the flip graph would
somehow have to cope with this freedom, and it is unclear what is the best way of doing this.
This may be an explanation why an automated search in the flip graph works best for $K=\set Z_2$.

For this case, we have found low-rank tensor representations for various small values of $n$ and~$m$.
For some of them, we were able to show that the representations are optimal, and hence that the
rank of the polynomial multiplication tensor over $\set Z_2$ is sometimes strictly larger than the
rank over larger fields. To get a clearer picture of the nature of the polynomial multiplication
tensor and of the structure of flip graphs, it would also be interesting to specifically search
for low rank representations for other small fields, e.g., $\set Z_3,\set Z_5$, or~$\set Z_7$,
in particular for those degrees where Hensel-lifting and rational reconstruction so far only led
to schemes with coefficients in $\set Z[\frac1{105}]$. 

Moreover, flip graphs could be used to analyze the tensor ranks for certain types of non-commutative
polynomial multiplication, i.e., the multiplication in a Weyl algebra. For large sizes, it is known
that the complexity of multiplication there is connected to the complexity of matrix
multiplication~\cite{bostan08b}. For small sizes, there may however be discrepancies.

Finally, the matrix multiplication tensor is of pivotal interest. For this tensor, it remains unclear
whether it is always possible to reach an optimal representation starting from the standard
representation (without splits). If so, we would like to know a bound on the length of the shortest
path and perhaps make some statements about its structure. Such understanding might be helpful for
the design of more efficient search techniques that could help to find low rank tensor representations
for matrix multiplication tensors of sizes that are currently out of reach.

\bibliographystyle{ACM-Reference-Format}
\balance
\bibliography{bib}

\end{document}